\newtheorem{theorem}{Theorem} %
\newtheorem{lemma}[theorem]{Lemma} %
\newcommand*{\ket}[1]{\ensuremath{|#1\rangle}} %
\newcommand*{\bra}[1]{\ensuremath{\langle#1|}} %
\newcommand*{\op}[2]{\ensuremath{\left|#1\right\rangle
    \left\langle#2\right|}} %
\newcommand*{\norm}[1]{\ensuremath{\left\lVert #1 \right\rVert}} %
\newcommand*{\complex}{\mathbb{C}} %
\newcommand*\dif{\mathop{}\nobreak \mskip-\thinmuskip\nobreak
  \mathrm{d}} %
\DeclareMathOperator{\Density}{D} %
\DeclareMathOperator{\Sep}{Sep} %
\DeclareMathOperator{\Unitary}{U} %
\DeclareMathOperator{\tr}{tr} %
\DeclareMathOperator{\Tr}{Tr} %
\def\H{\mathcal{H}} %
\begin{document}


\title{Detecting Consistency of Overlapping Quantum Marginals by
  Separability}

\author{Jianxin Chen} %
\affiliation{Joint Center for Quantum Information and Computer
  Science, University of Maryland, College Park, Maryland, USA}

\author{Zhengfeng Ji} %
\affiliation{Institute for Quantum Computing, University of Waterloo,
  Waterloo, Ontario, Canada} %
\affiliation{State Key Laboratory of Computer Science, Institute of
  Software, Chinese Academy of Sciences, Beijing, China}

\author{Nengkun Yu} %
\affiliation{Institute for Quantum Computing, University of Waterloo,
  Waterloo, Ontario, Canada} %
\affiliation{Department of Mathematics \& Statistics, University of
  Guelph, Guelph, Ontario, Canada}

\author{Bei Zeng} %
\affiliation{Institute for Quantum Computing, University of Waterloo,
  Waterloo, Ontario, Canada} %
\affiliation{Department of Mathematics \& Statistics, University of
  Guelph, Guelph, Ontario, Canada} %
\affiliation{Canadian Institute for Advanced Research, Toronto,
  Ontario, Canada}

\begin{abstract}
  The quantum marginal problem asks whether a set of given density
  matrices are consistent, i.e., whether they can be the reduced
  density matrices of a global quantum state. Not many non-trivial
  analytic necessary (or sufficient) conditions are known for the
  problem in general. We propose a method to detect consistency of
  overlapping quantum marginals by considering the separability of
  some derived states. Our method works well for the $k$-symmetric
  extension problem in general, and for the general overlapping
  marginal problems in some cases. Our work is, in some sense, the
  converse to the well-known $k$-symmetric extension criterion for
  separability.
\end{abstract}


\maketitle

The quantum marginal problem, also known as the consistency problem,
asks for the conditions under which there exists an $N$-particle
density matrix $\rho_N$ whose reduced density matrices (quantum
marginals) on the subsets of particles $S_i \subset \{1,2,\ldots, N\}$
equal to the given density matrices $\rho_{S_i}$ for all
$i$~\cite{Kly06}. The related problem in fermionic (bosonic) systems
is the so-called $N$-representability problem. It asks whether a
$k$-fermionic (bosonic) density matrix is the reduced density matrix
of some $N$-fermion (boson) state $\rho_N$. The $N$-representability
problem inherits a long history in quantum
chemistry~\cite{Col63,Erd72}.

The quantum marginal problem and the $N$-representability problem are
in general very difficult. They were shown to be the complete problems
of the complexity class QMA, even for the relatively simple case where
the given marginals are two-particle states~\cite{Liu06,LCV07,WMN10}.
In other words, even with the help of a quantum computer, it is very
unlikely that the quantum marginal problems can be solved efficiently
in the worst case. In this sense, the best hope to have simple
analytic conditions for the quantum marginal problem is to find either
necessary or sufficient conditions in certain special cases.

When the given marginals are states of non-overlapping subsets of
particles, and one is interested in a global \emph{pure} state
consistent with the given marginals, both the quantum marginal problem
and the $N$-representability problem were solved
~\cite{Kly04,Kly06,AK08,SGC13,walter2013entanglement,sawicki2014convexity}. However, not much is known for the
general problem with overlapping subsystems. For the tripartite case
of particles $A$, $B$, $C$, the strong subadditivity of von Neumann
entropy enforces non-trivial necessary conditions for the consistency
of $\rho_{AB}$ and $\rho_{AC}$ such as
$S(AB) + S(AC) \ge S(B) + S(C)$~\cite{CLL13}. In a similar spirit,
certain quantitative monogamy of entanglement type of results (see
e.g.~\cite{Coffman2000}) also put non-trivial necessary conditions.
Necessary and sufficient conditions are generally not known, except in
very few special situations~\cite{Smi65,chen2014symmetric,CLL13} when
$N$ is small.

In this work, we propose a simple but powerful analytic necessary
condition for arguably the simplest overlapping quantum marginal
problem, known as the $k$-symmetric extension problem. That is, we
will consider quantum marginal problems of $k+1$ particles $A$,
$B_1, B_2, \ldots, B_k$ for a given density matrix $\rho_{AB}$, and
require that there is a global quantum state
$\rho_{AB_1B_2\cdots B_k}$ whose marginals on $A, B_i$ equal to the
given $\rho_{AB}$ for $i=1,2,\ldots, k$. The classical analog of this
particular case is trivial and there is a consistent global
probability distribution as long as the marginals agree on $A$. In the
quantum case, however, the problem remains unsolved even for $k=2$.

We prove the separability of certain derived state as a necessary
condition for the $k$-symmetric extension problem. A quantum state
$\rho_{AB}$ is separable if it can be written as the convex
combination $\sum_i p_i \rho_{A,i}\otimes \rho_{B,i}$ for a
probability distribution $p_i$ and states $\rho_{A,i}$ and
$\rho_{B,i}$. It is now well-known that the $k$-symmetric extension of
$\rho_{AB}$ provides a hierarchy of separability criteria for
$\rho_{AB}$, which converges exactly to the set of separable states
when $k$ goes to infinity~\cite{doherty02a}. This result is
essentially given by the quantum de Finetti's
theorem~\cite{stormer1969symmetric,hudson1976locally,doherty02a,Renner2007,Christandl2007,harrow2013church}.
Our method can, in some sense, be thought of as a converse to the
$k$-symmetric extension criterion of separability. We will use
separability instead as a criterion to test $k$-symmetric
extendability of a bipartite state. This, however, does not cause any
circular reasoning problem---we can instead use other known
separability criteria, such as the positive partial transpose
condition~\cite{Per96,HHH96}, to give necessary conditions for the
$k$-symmetric extension problems.

In particular, our method computes a linear combination
$\tilde{\rho}_{AB}^{(k)}$ of the given density matrix $\rho_{AB}$ and
its reduced density matrix $\rho_A$. The separability of
$\tilde{\rho}_{AB}^{(k)}$ is then shown to be a necessary condition of
the corresponding $k$-symmetric extension problem for $\rho_{AB}$.

Interestingly, the condition can also be applied to the more general
setting of overlapping quantum marginal problems where the given
marginals on $A, B_i$ are different. We reduce them to the
$k$-symmetric extension problems of
$\frac{1}{k}\sum_{i=1}^k \rho_{AB_i}$. This averaging method may give
trivial conditions in adversarial situations. But it will nevertheless
provide non-trivial conditions better than many known results when the
given density matrices $\rho_{AB_i}$, though different, are related in
some way.

{\em Necessary conditions for the $k$-symmetric extension
  problems.---\/} Let $\H_A$, $\H_B$ be two Hilbert spaces of
dimension $d_A$ and $d_B$, respectively. For a Hilbert space $\H$, let
$\Density(\H)$ be the set of density matrices on $\H$. For a bipartite
state $\rho_{AB} \in \Density(\H_A\otimes \H_B)$, we consider the
following overlapping quantum marginal problem: whether there exists a
state
$\rho_{AB_1B_2\cdots B_k} \in \Density \bigl(H_A \otimes
(\bigotimes_{i=1}^k\H_{B_i}) \bigr)$
whose marginals on $A, B_i$ equal to $\rho_{AB}$ for all
$i=1, 2, \ldots, k$. The problem is also called the $k$-symmetric
extension problem of
$\rho_{AB}$~\cite{doherty02a,DPS04,DPS05,NOP09,BC12} and the global
state $\rho_{AB_1B_2\cdots B_k}$ is called a $k$-symmetric extension
of $\rho_{AB}$. If such a global state $\rho_{AB_1B_2\cdots B_k}$
exists, one can choose it to be invariant under permutations of
$B_1, B_2, \ldots, B_k$~\cite{doherty02a,myhr2011symmetric}.

If the state $\rho_{AB}$ is separable, then it is also obviously
$k$-symmetric extendable for any $k$. Interestingly, the converse of
the statement is also true. That is, if $\rho_{AB}$ is $k$-symmetric
extendable for all $k$, then $\rho_{AB}$ must be
separable~\cite{DPS04}. This provides a complete hierarchy of
separability criteria. The $k$-symmetric extension problem can be
formulated as a semidefinite programming (SDP), providing a numerical
procedure to detect entanglement in a mixed state (see
e.g.~\cite{VB96}).

In this paper, we want to know for a given $k$, whether $\rho_{AB}$ is
$k$-symmetric extendable. One can of course use the semidefinite
programming to solve the problem, but the size of the SDP formulation
will grow exponentially with $k$, rendering the approach impractical
even numerically for large $k$. We will instead use the separability
of some derived state $\tilde{\rho}_{AB}^{(k)}$ to detect the
$k$-extendability of $\rho_{AB}$. The important thing is that the
dimension of the state $\tilde{\rho}_{AB}^{(k)}$ is independent of
$k$.

For convenience, we will also consider a variant of the $k$-symmetric
extension problem called the $k$-bosonic extension problem. For
Hilbert spaces $\H_i$ of dimension $d$, let $\bigvee_{i=1}^k \H_i$ be
the symmetric subspace of $\bigotimes_{i=1}^k \H_i$. A state
$\rho_{AB}$ has a $k$-bosonic extension if it has a $k$-symmetric
extension $\rho_{AB_1B_2\cdots B_k}$ whose support on
$B_1, B_2, \ldots, B_k$ is in the symmetric subspace
$\bigvee_{i=1}^k \H_{B_i}$.

Our main observation is the following theorem. In the theorem,
$\H_A$ and $\H_B$ are two Hilbert spaces of dimension $d_A$ and $d_B$
respectively.

\begin{theorem}
  \label{thm:main}
  If a bipartite state $\rho_{AB} \in \Density(\H_A \otimes \H_B)$ has
  a $k$-symmetric extension, then the bipartite state
  \begin{equation}
    \label{eq:main}
    \tilde{\rho}^{(k)}_{AB} = \frac{1}{d_B^2+k} (d_B\rho_A\otimes
    I_B+k\rho_{AB})
  \end{equation}
  is separable.
\end{theorem}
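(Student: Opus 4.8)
The plan is to exploit the permutation symmetry of the extension together with a twirl over the symmetric group, reducing everything to a statement about the symmetric subspace, where separability can be read off from an explicit de Finetti-type integral representation. Concretely, let $\rho_{AB_1\cdots B_k}$ be a $k$-symmetric extension, which we may assume invariant under permutations of $B_1,\ldots,B_k$. The key idea is to trace out all but \emph{one} $B_j$ after \emph{symmetrizing}, but to do the symmetrization cleverly: project (or twirl) the $B$-systems onto the symmetric subspace $\bigvee_{i=1}^k\H_{B_i}$ and see what marginal on $AB_1$ results. Since the symmetric subspace is spanned by $\ket{\psi}^{\otimes k}$, the reduced state on a single $B$-system of any state supported there is a convex combination of pure states $\op{\psi}{\psi}$, and one hopes the same convexity survives when we keep the $A$ register along for the ride.

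First I would reduce to the $k$-bosonic case. Given a permutation-invariant $k$-symmetric extension $\sigma=\rho_{AB_1\cdots B_k}$, apply the projector $\Pi_{\mathrm{sym}}$ onto $\bigvee_{i=1}^k \H_{B_i}$ to the $B$-systems, forming $\tau = (I_A\otimes \Pi_{\mathrm{sym}})\,\sigma\,(I_A\otimes\Pi_{\mathrm{sym}})$. This $\tau$ need not be normalized, but I claim that $\tr_{B_2\cdots B_k}\tau$ is, up to the $d_B^2+k$ normalization, exactly (a positive multiple of) $\tilde\rho^{(k)}_{AB}$ — or more precisely, that $\tilde\rho^{(k)}_{AB}$ is a convex combination of $\tr_{B_2\cdots B_k}$ of the bosonic part and the complementary antisymmetric-type contributions, the latter being handled separately. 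The algebraic heart is the identity $\Pi_{\mathrm{sym}}$ on two systems $=\tfrac12(I+\SWAP)$, generalizing on $k$ systems to an average of permutation operators; pairing this with the constraint $\tr_{B_i}\sigma = \rho_{AB}$ for each $i$ and $\tr_{B_1B_i}\sigma$ being permutation-consistent yields, after taking the partial trace, precisely the linear combination $d_B\,\rho_A\otimes I_B + k\,\rho_{AB}$ with denominator $\tr(\Pi_{\mathrm{sym}})$-type factor $d_B^2+k$ coming from $\binom{d_B+1}{2}$-style counting on a single pair. I would carry this out by computing $\tr_{B_2\cdots B_k}\bigl[(I_A\otimes\Pi_{\mathrm{sym}})\,\sigma\bigr]$ directly: expanding $\Pi_{\mathrm{sym}}=\frac{1}{k!}\sum_{\pi}W_\pi$, only the identity and the $k-1$ transpositions $(1\,i)$ survive the partial trace in a form that references $\rho_{AB}$ (via $\tr_{B_i}$) or $\rho_A\otimes I$ (via the $\SWAP$ contraction), and the normalization falls out.

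Second, having expressed $\tilde\rho^{(k)}_{AB}$ (up to a positive scalar) as $\tr_{B_2\cdots B_k}$ of a state supported on $\H_A\otimes\bigvee_{i=1}^k\H_{B_i}$, I invoke separability of such marginals: any density operator $\omega$ on $\H_A\otimes\bigvee_{i=1}^k\H_{B}$ has $\tr_{B_2\cdots B_k}\omega$ separable across $A:B_1$. This is itself a small lemma — one way is to write $\Pi_{\mathrm{sym}} = \binom{d_B+k-1}{k}\int \op{\psi}{\psi}^{\otimes k}\,d\psi$ over the Haar-uniform pure states, so that for $\omega$ supported there, $\tr_{B_2\cdots B_k}\omega = \int q(\psi)\,\omega_A(\psi)\otimes\op{\psi}{\psi}\,d\psi$ for a suitable positive weight $q$ and states $\omega_A(\psi)$ on $\H_A$; this is manifestly a separable decomposition. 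Assembling: $\tilde\rho^{(k)}_{AB}$ is a convex combination of separable states (the bosonic contribution just described, plus any lower-weight pieces), hence separable.

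The main obstacle is the \emph{exact bookkeeping of coefficients}: verifying that the partial trace of the symmetrized extension delivers \emph{precisely} the combination $\frac{1}{d_B^2+k}(d_B\,\rho_A\otimes I_B + k\,\rho_{AB})$ and not some other linear combination, and in particular confirming that the ``leftover'' non-symmetric part contributes nonnegatively (or can be absorbed) so that the final object is a genuine convex mixture rather than a signed one. The denominator $d_B^2+k$ strongly suggests the decomposition $\H_B\otimes\H_B = \bigvee^2 \oplus \bigwedge^2$ with $\dim = \binom{d_B+1}{2} + \binom{d_B}{2} = d_B^2$, together with $k-1$ transposition terms contributing the extra $+k$ (counting the identity as well), so I expect the clean route is to work with $k=2$ first to fix the pattern and then track how the $\frac1{k!}\sum_\pi$ structure scales. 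Once the $k=2$ identity is pinned down, the general $k$ case should be a matter of careful but routine symmetric-group combinatorics.
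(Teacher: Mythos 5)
There are two genuine gaps here, and together they sink the approach as written. First, your ``small lemma'' --- that any density operator $\omega$ on $\H_A\otimes\bigvee_{i=1}^k\H_{B_i}$ has $\tr_{B_2\cdots B_k}\omega$ separable across $A:B_1$ --- is false: it is precisely the statement that every $k$-bosonic-extendible state is separable, which would collapse the entire DPS hierarchy at the first level and contradict the Werner-state thresholds quoted in the paper. The flaw in your integral argument is that writing $\Pi_{\mathrm{sym}}\propto\int\op{\psi}{\psi}^{\otimes k}d\psi$ and sandwiching $\omega=\Pi_{\mathrm{sym}}\omega\Pi_{\mathrm{sym}}$ produces cross terms $\bra{\psi}^{\otimes k}\omega\ket{\phi}^{\otimes k}$ with $\psi\neq\phi$ that do not vanish, so the partial trace is not a convex mixture of product states. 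What is true (Lemma~\ref{lm:boson}, via Chiribella's theorem) is that the \emph{measure-and-prepare} map $\mathcal{E}(\omega)=\int\bra{u}^{\otimes k}\omega\ket{u}^{\otimes k}\op{u}{u}\dif\mu(u)$ has manifestly separable output equal to a positive multiple of $\rho_A\otimes I_B+k\rho_{AB}$ --- not of $\rho_{AB}$ itself --- and this is exactly where the nontrivial linear combination comes from.

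Second, the reduction from the symmetric to the bosonic case by projecting the $B$-systems onto $\bigvee_i\H_{B_i}$ does not work. For a permutation-invariant $\sigma$, $\Pi_{\mathrm{sym}}\sigma\Pi_{\mathrm{sym}}=\frac{1}{k!}\sum_\pi W_\pi\sigma$, and already for $k=2$ the term $\tr_{B_2}[W\sigma_{AB_1B_2}]$ is a realignment-type contraction of the full two-body state that is \emph{not} a function of $\rho_{AB}$ and $\rho_A$; moreover the discarded antisymmetric component has no reason to contribute a separable (or even sign-controlled) remainder. Your numerological reading of $d_B^2+k$ via $\dim(\bigvee^2)+\dim(\bigwedge^2)=d_B^2$ points at the wrong mechanism. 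The paper's actual route is a purification trick you are missing: a permutation-invariant extension $\rho_{AB_1\cdots B_k}$ admits a purification $\ket{\Phi}\in\H_A\otimes\H_{A'}\otimes\bigl[\bigvee_{i=1}^k(\H_{B_i}\otimes\H_{B_i'})\bigr]$ with $d_{B_i'}=d_B$, so that $\op{\Phi}{\Phi}$ is a genuine $k$-\emph{bosonic} extension of its $AA'B_1B_1'$ marginal on a $B$-system of dimension $d_B^2$. Applying Lemma~\ref{lm:boson} there gives the denominator $d_B^2+k$, and tracing out $A'$ and $B'$ (whence $\tr_{B'}I_{BB'}=d_B I_B$) produces the coefficient $d_B$ in Eq.~\eqref{eq:main} while preserving separability. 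Without this step your argument cannot produce the correct constants, and more importantly cannot produce a valid separable decomposition at all.
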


In order to prove this theorem, we first recall the following lemma
~\cite{navascues2009power,doherty2014entanglement}.

\begin{lemma}
  \label{lm:boson}
  If a bipartite state $\rho_{AB} \in \Density(\H_A \otimes \H_B)$ has
  a $k$-bosonic extension, then the bipartite state
  \begin{equation}
    \label{eq:bosext}
    \hat{\rho}^{(k)}_{AB} = \frac{1}{d_B+k} (\rho_A\otimes I_B +
    k\rho_{AB})
  \end{equation}
  is separable.
\end{lemma}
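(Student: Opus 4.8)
The plan is to produce an explicit separable operator directly from the bosonic extension and then to recognize $\hat\rho^{(k)}_{AB}$ as a positive scalar multiple of it. Fix a $k$-bosonic extension $\omega = \rho_{AB_1\cdots B_k}$, i.e.\ a state on $\H_A\otimes(\bigotimes_{i=1}^k\H_{B_i})$ whose support on the $B$ systems lies in $\bigvee_{i=1}^k\H_{B_i}$ and whose every marginal satisfies $\omega_{AB_i}=\rho_{AB}$. For a pure state $\ket{\psi}\in\H_B$, sandwich all $k$ copies to define the positive operator on $\H_A$
\begin{equation}
  \rho_A^{(\psi)} \defeq \bra{\psi}^{\otimes k}\,\omega\,\ket{\psi}^{\otimes k} \ge 0,
\end{equation}
and form $\xi \defeq \int \rho_A^{(\psi)}\otimes\op{\psi}{\psi}\dif\psi$, where $\dif\psi$ is the normalized unitarily invariant measure on pure states of $\H_B$. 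By construction $\xi$ is a continuous convex mixture of product operators, hence manifestly separable; the entire content of the lemma is to evaluate it in closed form.

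The key step is to convert the $\psi$-integral into a partial trace against a symmetric projector. Writing $P_m$ for the projector onto $\bigvee_{i=1}^m\H_{B_i}$ and introducing a fresh copy $B_{k+1}$ to carry the $B$-output, the integrand is $\bra{\psi}_{B_1\cdots B_k}^{\otimes k}(\omega\otimes I_{B_{k+1}})\ket{\psi}_{B_1\cdots B_k}^{\otimes k}$ with $\op{\psi}{\psi}$ sitting on $B_{k+1}$, so the coherent-state moment identity $\int(\op{\psi}{\psi})^{\otimes(k+1)}\dif\psi = \binom{d_B+k}{k+1}^{-1}P_{k+1}$ gives
\begin{equation}
  \xi = \binom{d_B+k}{k+1}^{-1}\,\Tr_{B_1\cdots B_k}\bigl[(\omega\otimes I_{B_{k+1}})\,P_{k+1}\bigr].
\end{equation}
To compute the trace I would use the symmetric-projector recursion $P_{k+1}=\frac{1}{k+1}\bigl(I+\sum_{i=1}^k\SWAP_{B_iB_{k+1}}\bigr)(P_k\otimes I_{B_{k+1}})$, which follows by taking coset representatives of $S_k$ in $S_{k+1}$. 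Here the bosonic support enters decisively: since $\omega$ lives in the symmetric subspace, $(\omega\otimes I)(P_k\otimes I)=\omega\otimes I$, so the identity term contributes $\Tr_{B_1\cdots B_k}[\omega\otimes I]=\rho_A\otimes I_B$, while each transposition term contributes, via the partial-trace identity $\Tr_{B_i}[(X_{B_i}\otimes I_{B_{k+1}})\SWAP_{B_iB_{k+1}}]=X_{B_{k+1}}$ together with tracing out the remaining copies, exactly one one-body marginal $\rho_{AB}$ in the $B_{k+1}$ slot. Summing the $k$ identical transposition contributions yields $\Tr_{B_1\cdots B_k}[(\omega\otimes I)P_{k+1}]=\frac{1}{k+1}(\rho_A\otimes I_B+k\,\rho_{AB})$, so that $\xi$ is a positive multiple of $\rho_A\otimes I_B+k\rho_{AB}$, and hence of $\hat\rho^{(k)}_{AB}$; separability of $\xi$ then passes to $\hat\rho^{(k)}_{AB}$.

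The main obstacle is the partial-trace evaluation of the second displayed equation: one must arrange that the symmetric support of $\omega$ is invoked at precisely the point where $P_k$ acts trivially on $\omega$, and that the SWAP partial-trace identity is applied so that each transposition reproduces $\rho_{AB}$ rather than some other contraction. Two features keep the argument robust and avoid bookkeeping traps: the positivity $\rho_A^{(\psi)}\ge0$ is immediate from $\omega\ge0$, so no estimate is needed there; and only a \emph{positive} proportionality to $\hat\rho^{(k)}_{AB}$ is required, so the exact value of $\binom{d_B+k}{k+1}^{-1}$ is irrelevant to the conclusion. As a consistency check, for $k=1$ the construction reduces to applying a coherent-state measure-and-prepare (entanglement-breaking) channel to $B$, and one recovers the isotropic-state separability threshold, confirming both the normalization and the tightness of the resulting criterion.
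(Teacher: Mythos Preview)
Your proof is correct and follows the same approach as the paper: both form the manifestly separable operator $\xi=\int\rho_A^{(\psi)}\otimes\op{\psi}{\psi}\,\dif\psi$, rewrite it as a partial trace of $\omega\otimes I$ against the symmetric projector $P_{k+1}\propto\sum_{\pi\in S_{k+1}}W_\pi$, and evaluate to obtain a positive multiple of $\rho_A\otimes I_B+k\rho_{AB}$. The only cosmetic difference is in the partial-trace bookkeeping---the paper expands $\omega$ as a linear combination of tensor powers $\op{\phi}{\phi}^{\otimes k}$ (invoking that these span the symmetric subspace) and then treats each $\pi\in S_{k+1}$ case-by-case, whereas you use the coset recursion $P_{k+1}=\tfrac{1}{k+1}\bigl(I+\sum_i\SWAP_{B_iB_{k+1}}\bigr)(P_k\otimes I)$ and absorb $P_k$ directly into the bosonic support of $\omega$, which is a slightly more self-contained computation.
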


We include a proof of Lemma~\ref{lm:boson} for completeness, which 
will directly lead to a proof of Theorem~\ref{thm:main} and a generalization
to the multi-party marginals case as discussed later.

\begin{proof}
  Let $\H_{B_i}$ be Hilbert spaces of dimension $d_B$ and let
  $\rho \in \Density(\bigvee_{i=1}^k \H_{B_i})$ be a state supported
  on the symmetric subspace $\bigvee_{i=1}^k \H_{B_i}$. Consider the
  following superoperator $\mathcal{E}$:
  \begin{equation}
    \label{eq:k1}
    \begin{split}
      \mathcal{E} (\rho) & = \int \bra{u}^{\otimes k}
      \rho\; \ket{u}^{\otimes k} \op{u}{u} \dif\mu(u),\\
      & = \Tr_{B_1\cdots B_k} \Bigl[ \bigl( I_B \otimes \rho \bigr)
      \int
      \op{u}{u}^{\otimes k+1} \dif\mu(u) \Bigr]\\
      & \propto \Tr_{B_1\cdots B_k} \Bigl[ \bigl( I_B \otimes \rho
      \bigr) \sum_{\pi\in S_{k+1}} W_\pi \Bigr],
    \end{split}
  \end{equation}
  where $\dif\mu(u)$ is the Haar measure over the pure states of
  $\H_B$ and $W_\pi$ is the permutation operator defined by
  \begin{equation*}
    W_\pi \ket{i_1, i_2, \ldots, i_k} = \ket{i_{\pi^{-1}(1)},
      i_{\pi^{-1}(2)}, \ldots, i_{\pi^{-1}(k)}}.
  \end{equation*}
  We claim that
  \begin{equation}
    \label{eq:claim}
    \mathcal{E} (\rho) \propto \tr(\rho) I_B + k\rho_B,
  \end{equation}
  for all state $\rho \in \Density(\bigvee_{i=1}^k \H_{B_i})$ where
  $\rho_B$ is the $1$-particle marginal of $\rho$. The claim follows
  from the Chiribella's theorem~\cite{chiribella2010quantum}; we give
  a proof here for its importance to our work. By the fact that any
  state $\rho$ supported on the symmetric subspace $\bigvee^k \H_B$
  can be written as the linear combination of states of the form
  $\op{\phi}{\phi}^{\otimes k}$ (see the Appendix
  of~\cite{chiribella2010quantum}), it suffices to prove the claim in
  Eq.~\eqref{eq:claim} for $\rho = \op{\phi}{\phi}^{\otimes k}$. For
  all $\pi \in S_k$,
  \begin{equation*}
    \Tr_{B_1\cdots B_k} \Bigl[ \bigl(I_B \otimes
    \op{\phi}{\phi}^{\otimes k} \bigr) W_\pi \Bigr] =
    \begin{cases}
      I_B & \text{if } \pi(1) = 1,\\
      \op{\phi}{\phi} & \text{otherwise}.
    \end{cases}
  \end{equation*}
  There are $k!$ permutations $\pi$ such that $\pi(1)=1$ and
  $k\cdot k!$ permutations $\pi(1)\ne 1$ and the claim follows from
  Eq.~\eqref{eq:k1}.

  If $\rho_{AB}$ has a $k$-bosonic extension
  $\rho_{AB_1B_2\cdots B_k}$, by Eq.~\eqref{eq:claim},
  \begin{equation*}
    \mathcal{I}_A\otimes \mathcal{E} (\rho_{AB_1B_2\cdots B_k})
    \propto \rho_A \otimes I_B + k\rho_{AB}.
  \end{equation*}
  The separability of $\hat{\rho}^{(k)}_{AB}$ then follows from the
  positive semidefinite property of $\rho_{AB_1B_2\cdots B_k}$ and
  Eq.~\eqref{eq:k1}.
\end{proof}

We now prove Theorem~\ref{thm:main}.
\begin{proof}[Proof of Theorem~\ref{thm:main}]
  Let
  $\rho \in \Density \bigl( \H_A\otimes (\bigotimes_{i=1}^k \H_{B_i})
  \bigr)$
  be the $k$-symmetric extension of $\rho_{AB}$. There exists a
  purification
  \begin{equation*}
    \ket{\Phi} \in \H_A \otimes \H_{A'} \otimes \Bigl[ \bigvee_{i=1}^k
    (\H_{B_i} \otimes \H_{B_i'}) \Bigr]
  \end{equation*}
  of $\rho$ where $d_{A'} = d_A$ and
  $d_{B_i'} = d_B$~\cite{Watrous2011}. State
  $\sigma = \op{\Phi}{\Phi}$ is the $k$-bosonic extension of its
  reduced density matrix $\sigma_{AA'BB'}$ on $A, A', B_1, B_1'$. By
  Lemma~\ref{lm:boson},
  \begin{equation*}
    \hat{\sigma}_{AA'BB'} = \frac{1}{d_B^2 + k} \bigl( \sigma_{AA'}
    \otimes I_{BB'} + k \sigma_{AA'BB'} \bigr)
  \end{equation*}
  is separable between $AA'$ and $BB'$. Tracing out the systems $A'$
  and $B'$, it follows that
  \begin{equation*}
    \tilde{\rho}^{(k)}_{AB} = \frac{1}{d_B^2 + k} \bigl( d_B \rho_A
    \otimes I_B + k \rho_{AB} \bigr)
  \end{equation*}
  is separable.
\end{proof}

{\em Examples of Bell-diagonal states.---\/} First consider the simple
case of $k=2$, and $A, B$ are qubit systems ($d_A=d_B=2$). Since for
any two-qubit state, the existence of a $2$-symmetric extension
implies that of a $2$-bosonic extension (see Proposition 21
of~\cite{myhr2011symmetric}), we can use the stronger condition of
Eq.~\eqref{eq:bosext} also for the symmetric extension problem. For
simplicity, we will investigate our condition for $2$-symmetric
extension for the class of Bell-diagonal states. A state $\rho_{AB}$
is Bell-diagonal if it is of the form
\begin{equation}
  \label{eq:bell-diag}
  \rho_{AB} = \sum_{i=1}^4 p_i \op{\Phi_i}{\Phi_i},
\end{equation}
where $p_i\in [0,1]$, $\sum_i p_i = 1$ and
\begin{align*}
  \ket{\Phi_1} & = (\ket{00}+\ket{11})/\sqrt{2},\; \ket{\Phi_2} =
                 (\ket{00}-\ket{11})/\sqrt{2},\\
  \ket{\Phi_3} & = (\ket{01}+\ket{10})/\sqrt{2},\; \ket{\Phi_4} =
                 (\ket{01}-\ket{10})/\sqrt{2}
\end{align*}
are the four Bell states.

A simple computation tells that our condition that
$\hat{\rho}^{(2)}_{AB}$ being separable is equivalent to
$p_i\in [0,3/4]$ for all $i=1,2,3,4$. This is a close approximation of
the exact condition of $2$-symmetric extendability given
in~\cite{ML09,chen2014symmetric,Cerf2000,niu1998optimal,cubitt2008structure}:
\begin{equation*}
  \frac{1}{2} \ge \sum_{i=1}^4 p_i^2 - 4\Bigl( \prod_{i=1}^4 p_i
  \Bigr)^{1/2}.
\end{equation*}
The regions of $p_1, p_2, p_3$ given by these two conditions are
plotted in Fig.~\ref{fig:ext-gold}. The volume of the exact set is
approximately $0.15115$ and the volume of the polytope given by our
condition is $0.15625$, which is only about $3\%$ larger.

\begin{figure}[htbp]
  \centering
  \begin{subfigure}[t]{\linewidth}
    \includegraphics[width=7cm]{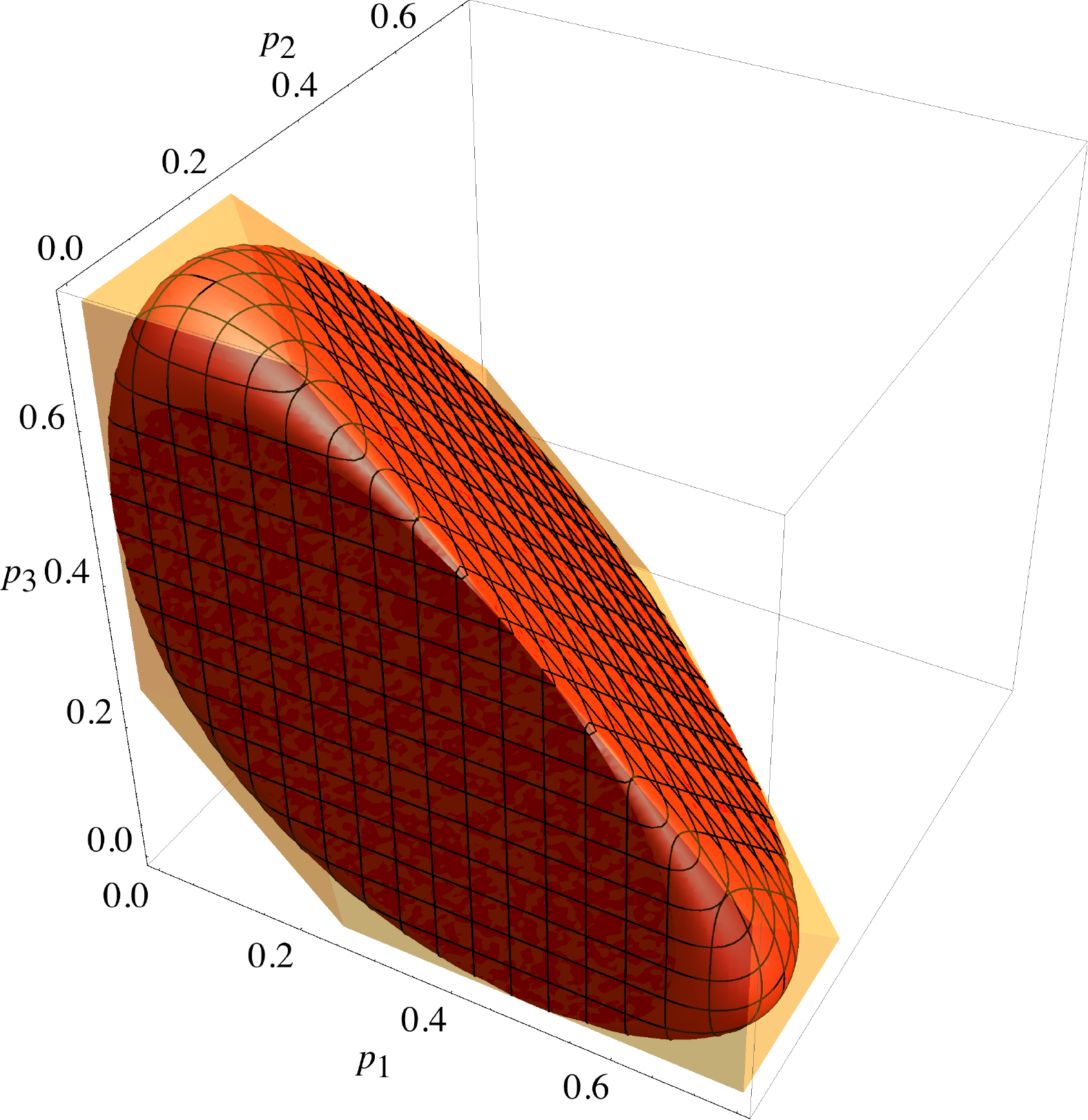}
    \caption{}
  \end{subfigure}
  \begin{subfigure}[t]{\linewidth}
    \includegraphics[width=6cm]{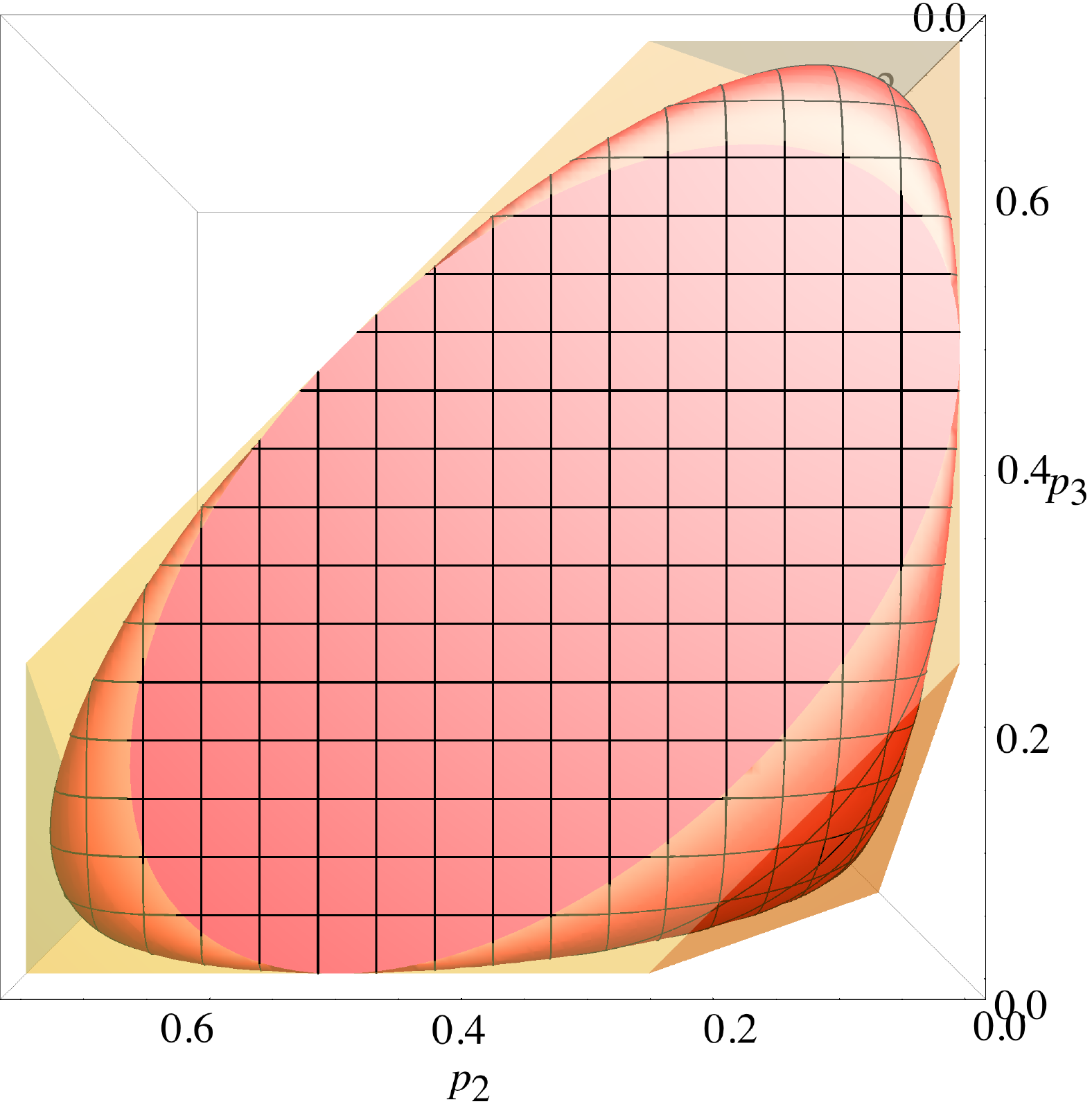}
    \caption{}
  \end{subfigure}
  \caption{(a) The polytope of yellow color characterized by
    $0\le p_i \le 3/4$ and $1/4 \le p_1 + p_2 + p_3 \le 1$ is the
    condition given by the separability of $\hat{\rho}_{AB}^{(2)}$.
    The convex set of red color is given by the necessary and
    sufficient condition for $2$-symmetric extension of Bell-diagonal
    states. (b) is the left view of the
    same figure.}
  \label{fig:ext-gold}
\end{figure}

For comparison purposes, we have also plotted the conditions given by
the strong subadditivity (SSA). For Bell-diagonal states, the SSA
condition simplifies to $S(AB) \ge 1$. We find that our condition and
the SSA condition are incomparable---the non-extendability can
sometimes be detected by our condition but not the SSA condition, and
vice versa. See Fig.~\ref{fig:ext-ssa} for details.

\begin{figure}[htbp]
  \centering
  \includegraphics[width=6cm]{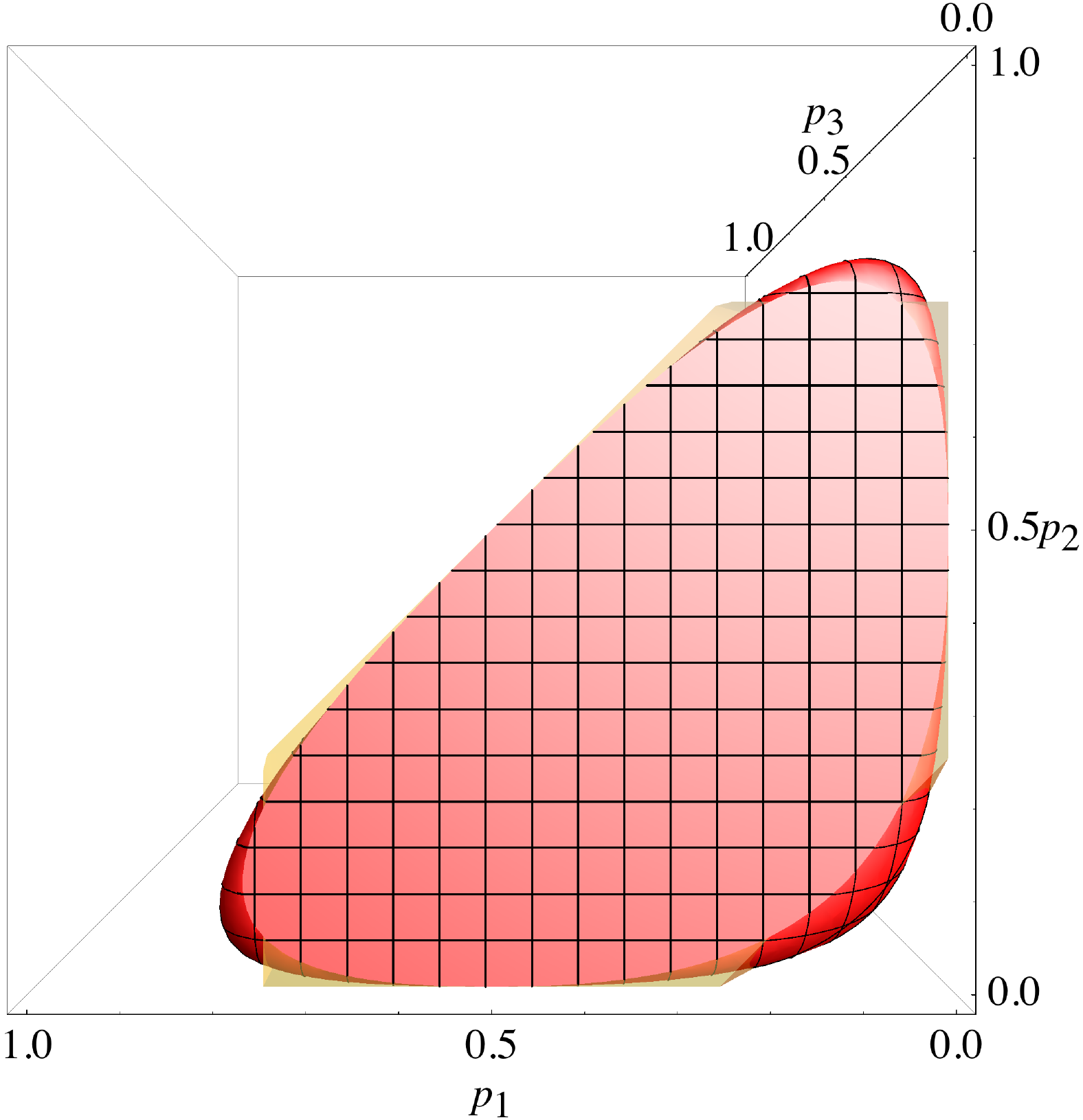}
  \caption{The two convex sets of $(p_1, p_2, p_3)$ corresponding to
    the condition given by the separability condition of
    $\rho^{(2)}_{AB}$ (the polytope of yellow color) and the SSA
    condition (the convex set of red color).}
  \label{fig:ext-ssa}
\end{figure}

{\em Examples of Werner states.---\/} In our next example, we analyze
our conditions for the $k$-symmetric extension problem of the Werner
states~\cite{Wer89,werner1990remarks}. A two-qudit Werner state is a state invariant
under the $U\otimes U$ operator for all unitary $U\in \Unitary(d)$ and
has the following form
\begin{equation*}
  \rho_W(\psi^{-}) = \frac{1+\psi^{-}}{2} \rho^{+} +
  \frac{1-\psi^{-}}{2} \rho^{-},
\end{equation*}
where $\psi^{-} \in [-1,1]$ is the parameter, $\rho^{+}$ and
$\rho^{-}$ are the states proportional to the projection of the
symmetric subspace $\vee^2 \complex^d$ and anti-symmetric subspace
$\wedge^2 \complex^d$ respectively. The Werner state
$\rho_W(\psi^{-})$ is separable if and only if $\psi^{-} \ge 0$. The
state $\tilde{\rho}_{W}^{(k)}(\psi^{-})$ is separable when
$\psi^{-} \ge -d/k$. Therefore, by Theorem~\ref{thm:main},
$\rho_W(\psi^{-})$ is not $k$-symmetric extendable if
$\psi^{-} < -d/k$. We note that our bound, though not optimal, is a
close approximation of the necessary and sufficient condition
$\psi^{-} \ge -(d-1)/k$ proved in~\cite{JV13} for the $k$-symmetric
extendability of Werner states. This also proves that the
$k$-symmetric extension and $k$-bosonic extension problems are
generally different. In particular, it also implies that the $d_B$ in
the linear combination in Eq.~\eqref{eq:main} is essential for the
$k$-symmetric extension problem.

{\em Applications to the overlapping marginal problems.---\/} We now
extend our method to the more general situation with different
marginals on $A, B_i$. That is, one asks whether there exists a state
$\rho_{AB_1B_2\cdots B_k} \in \Density \bigl(\H_A \otimes
(\bigotimes_{i=1}^k \H_{B_i}) \bigr)$
whose marginals on $A, B_i$ is the given density matrices
$\rho_{AB_i}$ for all $i=1,2,\ldots, k$. This consistency problem for
bipartite marginals is of vital importance in many-body physics and
quantum chemistry, where the Hamiltonians of the system in general
involve only two-body interactions~\cite{Col63,Erd72,ZCZW15}.

In order to use the necessary condition derived in the previous
section, we observe the following fact.
\begin{lemma}
  \label{lem:consistency}
  If the marginals $\rho_{AB_i}$ with $i=1,2,\ldots, k$ are
  consistent, then the bipartite state
  \begin{equation}
    \rho_{AB} = \frac{1}{k} \sum_{i=1}^{k} \rho_{AB_i}
  \end{equation}
  has $k$-symmetric extension.
\end{lemma}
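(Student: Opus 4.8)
\textit{Proof proposal.} The plan is to produce a $k$-symmetric extension of $\rho_{AB}$ by symmetrizing a consistent global state over permutations of the $B$ systems. Since in the $k$-symmetric extension problem all the spaces $\H_{B_i}$ have the same dimension $d_B$, I would first fix an identification of each $\H_{B_i}$ with a single space $\H_B$; under this identification every $\rho_{AB_i}$ is a state on $\H_A\otimes\H_B$, and consistency of the $\rho_{AB_i}$ means there is a global state $\rho_{AB_1\cdots B_k}\in\Density\bigl(\H_A\otimes(\bigotimes_{i=1}^{k}\H_{B_i})\bigr)$ whose marginal on $A,B_i$ equals $\rho_{AB_i}$ for each $i$.

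Next I would define the averaged state
\[
\sigma_{AB_1\cdots B_k}=\frac{1}{k!}\sum_{\pi\in S_k}(I_A\otimes W_\pi)\,\rho_{AB_1\cdots B_k}\,(I_A\otimes W_\pi)^{\dagger},
\]
where $W_\pi$ is the permutation operator on $\bigotimes_{i=1}^{k}\H_{B_i}$ defined above. This $\sigma$ is a density matrix, being a convex combination of unitary conjugates of $\rho_{AB_1\cdots B_k}$, and it is invariant under permutations of $B_1,\ldots,B_k$ by construction. It then remains to compute its marginal on $A,B_j$ for a fixed $j$: conjugating by $I_A\otimes W_\pi$ and tracing out all $B$-systems except $B_j$ returns the marginal of $\rho_{AB_1\cdots B_k}$ on $A$ and $B_{\pi^{-1}(j)}$, namely $\rho_{AB_{\pi^{-1}(j)}}$ under our identification. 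As $\pi$ ranges over $S_k$, the index $\pi^{-1}(j)$ takes each value in $\{1,\ldots,k\}$ exactly $(k-1)!$ times, so
\[
\sigma_{AB_j}=\frac{(k-1)!}{k!}\sum_{i=1}^{k}\rho_{AB_i}=\frac{1}{k}\sum_{i=1}^{k}\rho_{AB_i}=\rho_{AB},
\]
which shows that $\sigma$ is a $k$-symmetric extension of $\rho_{AB}$.

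The only mildly delicate point is the bookkeeping in this permutation average---checking carefully that ``conjugate by $W_\pi$, then trace out everything but $B_j$'' coincides with ``trace out everything but $B_{\pi^{-1}(j)}$ (and relabel)'', together with the orbit-counting that each $i$ appears $(k-1)!$ times---and being explicit that identifying the $\H_{B_i}$ is harmless since they share the dimension $d_B$. I do not expect a genuine obstacle here: once the symmetrization is written down, the statement is essentially immediate from the definition of $k$-symmetric extension.
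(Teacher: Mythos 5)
Your proposal is correct and follows essentially the same route as the paper: the paper's proof also symmetrizes the consistent global state over permutations of the $B$ systems (written as $\rho'_{AB_1\cdots B_k} = \frac{1}{k!}\sum_{\pi\in S_k}\rho_{AB_{\pi(1)}\cdots B_{\pi(k)}}$, which is your conjugation by $I_A\otimes W_\pi$ made implicit). Your marginal bookkeeping is the same averaging argument, just spelled out more explicitly.
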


\begin{proof}
  If $\rho_{AB_i}$ with $i=1,2,\ldots, k$ are consistent, then there
  exists a state
  $\rho_{AB_1B_2\cdots B_k} \in \Density(\H_A\otimes\H_B^{\otimes
    k})$,
  such that its reduced density matrix on the system $AB_i$ is
  $\rho_{AB_i}$ for all $i=1,2,\ldots, k$. Now consider the state
  \begin{equation}
    \rho'_{AB_1B_2\cdots B_k} = \frac{1}{k!} \sum_{\pi\in{S}_k}
    \rho_{AB_{\pi(1)} B_{\pi(2)} \cdots B_{\pi(k)}},
  \end{equation}
  where $S_k$ is the symmetric group of $k$ elements. Then
  $\rho'_{AB_1B_2\cdots B_k}$ is a $k$-symmetric extension of
  $\rho_{AB}$.
\end{proof}

This then allows us to use Theorem~\ref{thm:main} and Lemma~\ref{lm:boson}
to detect consistency of bipartite marginals. Consider the example of
a three-qubit system with $\rho_{AB} = \rho_W(\psi^{-}_1)$, and
$\rho_{AC} = \rho_W(\psi^{-}_2)$ for $\psi^{-}_i \in [-1,1]$ , both of
which are two-qubit Werner states. For two-qubit states, $2$-symmetric
extendability implies $2$-bosonic extendability. Hence, we can use the
condition of Eq.~\eqref{eq:bosext}, which implies that $\rho_{AB}$ and
$\rho_{AC}$ are consistent only if
$(\psi^{-}_1+\psi^{-}_2)/2 \ge -1/2$. This in fact gives a
quantitative entanglement monogamy
inequality~\cite{Werner1989,Coffman2000,Terhal2003,Bae2006,Osborne2006}
for Werner states.

\begin{figure}[htbp]
  \centering
  \begin{subfigure}[t]{.45\linewidth}
    \includegraphics[width=4cm]{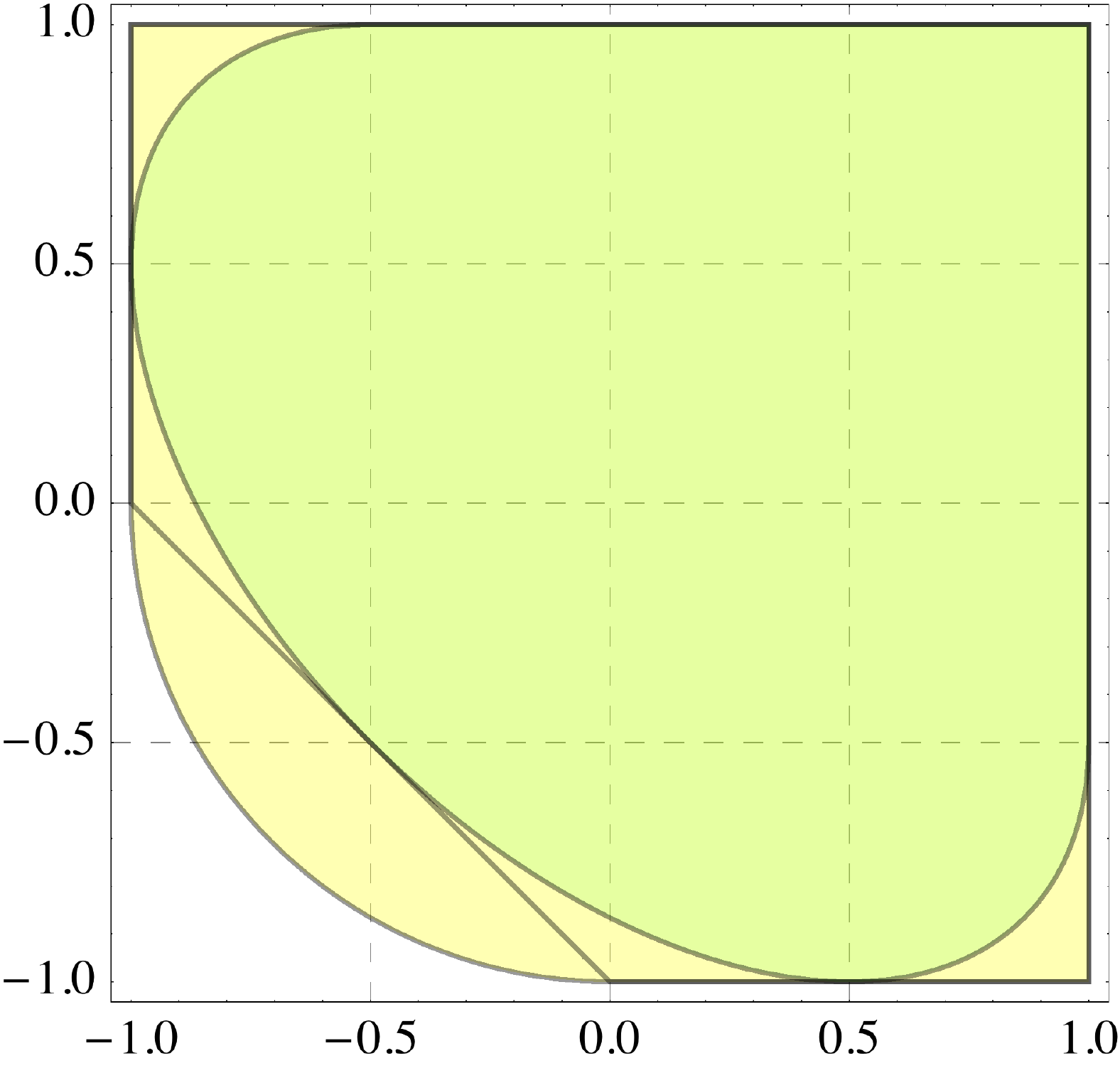}
    \caption{}
    \label{fig:wer-ckw}
  \end{subfigure}
  \begin{subfigure}[t]{.45\linewidth}
    \includegraphics[width=4cm]{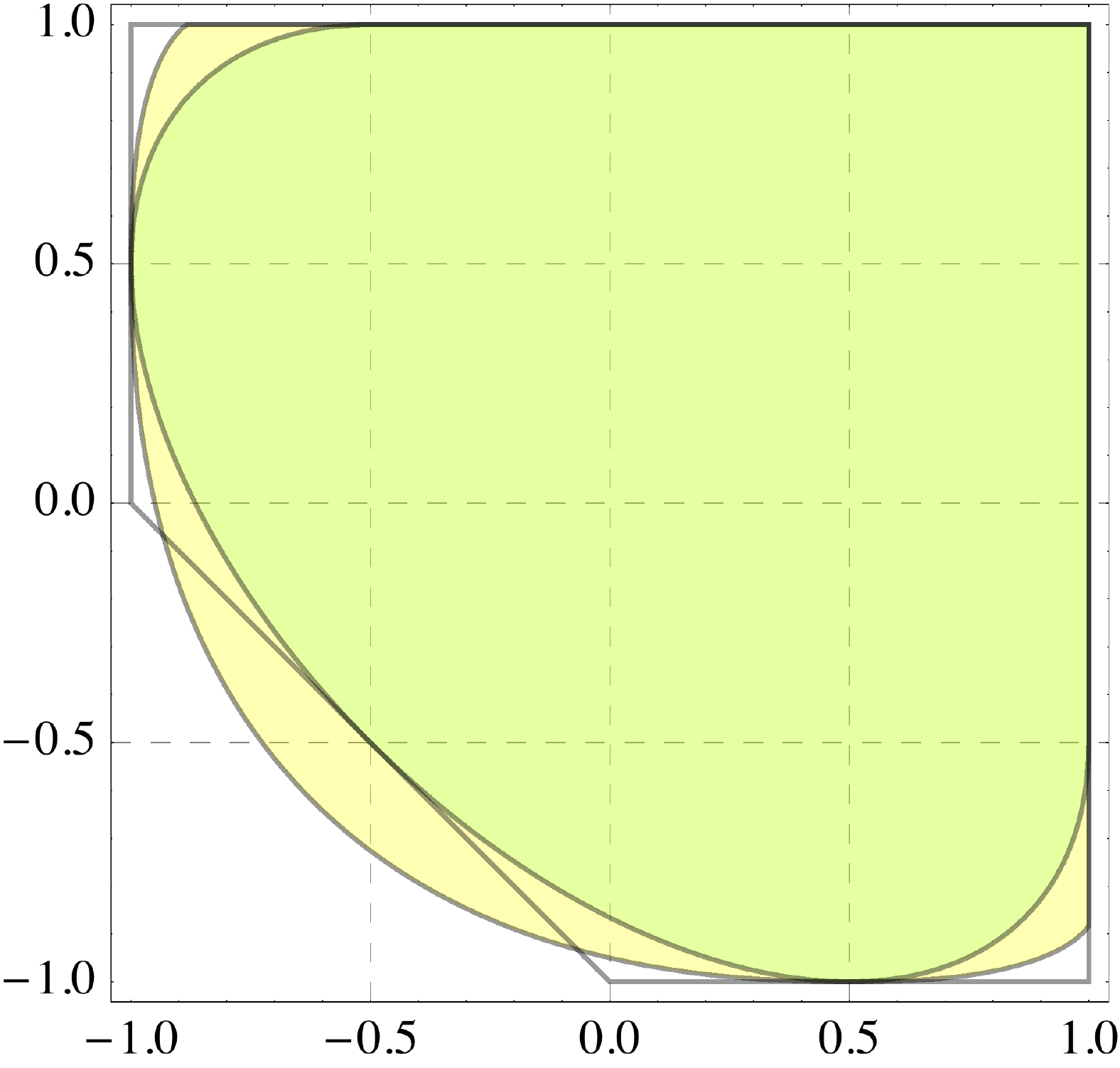}
    \caption{}
    \label{fig:wer-ssa}
  \end{subfigure}
  \caption{The green region is the exact condition for two Werner
    states to be consistent. The pentagon defined by
    $\psi^{-}_1 + \psi^{-}_2 \ge -1$ and $-1\le \psi^{-}_i \le 1$ is
    the condition given by our criterion. (a) is the condition given
    by the CKW entanglement monogamy inequality, and (b) is the SSA
    condition.}
  \label{fig:werner}
\end{figure}

We compare our condition to that given by the Coffman-Kundu-Wootters
(CKW) entanglement monogamy inequality~\cite{Coffman2000},
\begin{equation*}
  C_{AB}^2 + C_{AC}^2 \leq C_{A(BC)}^2,
\end{equation*}
where $C_{AB} = \max \{0, -\psi^{-}_1\}$,
$C_{AC} = \max \{0,-\psi^{-}_2\}$ are the
concurrences~\cite{HW97,Woo98} between $A, B$ and $A, C$ respectively,
while $C_{A(BC)} = 1$ is the concurrence between subsystems $A$ and
$BC$ for Werner states. As shown in Fig.~\ref{fig:wer-ckw}, our
condition (the pentagon defined by $\psi^{-}_1 + \psi^{-}_2 \ge -1$
and $-1\le \psi^{-}_i \le 1$) is always better than the condition
given by the CKW inequality (the union of the yellow and green
regions).

We have also computed the SSA condition for this particular case and
plotted the regions of the SSA condition and our condition in
Fig.~\ref{fig:wer-ssa}. Again, the SSA condition (the union of the
yellow and green regions) is incomparable with ours.

{\em Generalizations.---\/} Our method extends to the following more
general settings. Let
$\rho_{AB_1B_2\cdots B_r} \in \Density(\H_A \otimes \H_B^{\otimes r})$
be a given density matrix. The $(r,k)$-bosonic extension problem of
$\rho_{AB_1B_2\cdots B_r}$ asks whether there is a global state
$\rho_{AB_1B_2\cdots B_k} \in \Density \bigl( \H_A \otimes
(\bigvee^k\H_B) \bigr)$
whose marginal on $A, B_1, B_2, \ldots, B_r$ is
$\rho_{AB_1B_2\cdots B_r}$. Following a similar argument as in the
proof of Lemma~\ref{lm:boson} and using the Chiribella's
theorem~\cite{chiribella2010quantum,harrow2013church}, one obtains a
necessary condition generalizing Lemma~\ref{lm:boson}. Namely,
\begin{equation}
  \label{eq:rext}
  \hat{\rho}_{AB_1B_2\cdots B_r}^{(k)} = \sum_{s=0}^{r}p_s(k,d_B,r)
  \mathcal{I}_A\otimes \mathcal{E}_s(\rho_{AB_1\cdots B_s})
\end{equation}
is an $r+1$-party separable state. Here,
\begin{equation}
  p_s(k,d,r) = \frac{{k\choose s}{d+r-1 \choose r-s}}{{d+k+r-1 \choose
      r}},
\end{equation}
is a distribution satisfying $\sum_{s=0}^r p_s =1$, and
$\mathcal{E}_s$ is the superoperator given by
\begin{equation}
  \mathcal{E}_s(\rho)=\frac{d_s}{d_r}\Pi^{+}_{r}(\rho_s\otimes
  I^{\otimes (r-s)})\Pi^{+}_{r},
\end{equation}
where $ d_r={d+r-1 \choose r}, $ and $\Pi^{+}_{r}$ is the projection
onto the symmetric subspace $\vee^r \complex^d$.

At the moment, however, we do not know how to generalize the formula
in Theorem~\ref{thm:main} to this multi-party setting as the procedure
of tracing out $A',B_1',\ldots, B_r'$ does not commute with the
projection $\Pi^{+}_r$ in general. We leave it as an open problem for
future work.

{\em Summary and discussion.---\/} We have proposed a method to detect
consistency of overlapping quantum marginals. The key idea is to
construct some other density matrix from the linear combinations of
the local density matrices and test the separability of the derived
density matrix. Our idea is closely related to the finite quantum de
Finetti's
theorem~\cite{diaconis1980finite,doherty02a,Renner2007,Christandl2007},
which states that the $r$-particle marginal of a symmetric
$N$-particle state cannot be too far from an $r$-particle separate
state, with a distance bounded by $O(1/N)$ for fixed $d$ and $r$.
Therefore, if an $r$-particle state is too far from a separable state,
then it cannot be the marginal of a symmetric $N$-particle state.
However, to directly check the distance to the nearest separable state
is not easy. Moreover, the bound given in the known versions of finite
quantum de Finetti's theorem are in general not tight, so when $N$ is
small those bound may not be useful.

For comparison, our method gives simple necessarily conditions, which
are evidently good even for $N$ small. Our method can also lead to
improved bound in the finite de Finetti's theorem. For instance, as a
direct consequence of Theorem~\ref{thm:main}, 
we can obtain that for any $k$-symmetric
extendible state $\rho_{AB}$, its distance to separable states is
upper bounded by
\begin{equation}
  \min_{\rho\in \Sep}\norm{\rho_{AB} - \rho}_1 \le \norm{ \rho_{AB} -
    \tilde{\rho}^{(k)}_{AB}}_1 \le \frac{2d_B^2}{d_B^2+k},
\end{equation}
which slightly improves that of~\cite{Christandl2007}.

Another direct application is that in Lemma~\ref{lm:boson} if we
choose $k=1$, then from Eq.~\eqref{eq:bosext}, we get that for any
bipartite state $\rho_{AB}$, the state
\begin{equation}
\label{eq:sep}
  \sigma_{AB} = \frac{1}{d_B+1}(\rho_A\otimes I_B+\rho_{AB}),
\end{equation}
is always separable. Notice that Eq.~\eqref{eq:sep} implies that
$\sigma_A=\rho_A$, so we have 
$
({d_B+1}) \sigma_{AB} - \sigma_A\otimes I_B=\rho_{AB}\geq 0.
$
This gives an interesting sufficient condition
of separability for $\sigma_{AB}$: if 
$
({d_B+1})\sigma_{AB}\geq \sigma_A\otimes I_B,
$
then $\sigma_{AB}$ is separable. We may also compare this with the known
necessary condition of separability for $\sigma_{AB}$~\cite{HH99,CAG99}: if $\sigma_{AB}$ is separable, then 
$
\sigma_{AB}\leq \sigma_A\otimes I_B.
$
 
\section*{Acknowledgment}

We thank John Watrous for helpful discussions. ZJ acknowledges support
from NSERC and ARO. NY and BZ are supported by NSERC.

\bibliography{marginal}

\end{document}